\documentclass[11pt]{article}

\usepackage[margin=1in]{geometry}
\usepackage{bbm}
\usepackage{bm}
\usepackage{cancel}
\usepackage{xcolor}
\usepackage{color}
\definecolor{gray}{gray}{0.5}
\definecolor{lightred}{rgb}{1,0.6,0.6}
\definecolor{darkgreen}{rgb}{0,0.5,0}
\definecolor{myorange}{rgb}{0.8,0.7,0.5}
\definecolor{darkblue}{rgb}{0.0,0.0,0.5}
\usepackage[colorlinks=true,breaklinks=true,bookmarks=true,urlcolor=darkblue,
citecolor=darkblue,linkcolor=darkblue,bookmarksopen=false,draft=false]{hyperref}
\usepackage{mathtools}
\usepackage{amsmath,amssymb,amsthm}
\usepackage[round]{natbib}

\DeclareMathOperator{\sgn}{sgn}
\newcommand{\reals}{\mathbb{R}}
\newcommand{\E}{\mathop{\mathbb{E}}}

\newcommand{\ones}{\mathbf{1}}

\newcommand{\C}{\mathcal{C}}

\newcommand{\DMI}{\textrm{DMI}}

\newcommand{\sigmatrue}{\sigma_{\text{true}}}
\newcommand{\restr}[2]{#1(#2)}
\newcommand{\relab}[3]{\pi_{#1}^{#2 \to #3}}

\usepackage[colorinlistoftodos,textsize=tiny]{todonotes}
\tikzset{notestyleraw/.append style={inner sep = 2pt}}
\newcommand{\Comments}{1}
\newcommand{\mynote}[2]{\ifnum\Comments=1\textcolor{#1}{#2}\fi}
\newcommand{\mytodo}[2]{\ifnum\Comments=1%
  \todo[linecolor=#1!80!black,backgroundcolor=#1,bordercolor=#1!80!black]{#2}\fi}

\ifnum\Comments=1
\fi

\newtheorem{theorem}{Theorem}
\newtheorem{lemma}[theorem]{Lemma}
\newtheorem{proposition}[theorem]{Proposition}
\newtheorem{corollary}[theorem]{Corollary}
\newtheorem{definition}[theorem]{Definition}
\newtheorem{assumption}[theorem]{Assumption}
\theoremstyle{definition}
\newtheorem{remark}[theorem]{Remark}
\newtheorem{example}[theorem]{Example}

\newcommand{\qeddiamond}{\leavevmode\unskip\penalty9999\hbox{}\nobreak\hfill\quad$\diamond$}
\AtEndEnvironment{remark}{\qeddiamond}
\AtEndEnvironment{example}{\qeddiamond}

\title{Joint-task truthfulness of the DMI mechanism}
\author{Rafael Frongillo\\
  Department of Computer Science\\
  University of Colorado Boulder\\
  \texttt{raf@colorado.edu}}
\date{\today}

\begin{document}
\maketitle

\begin{abstract}
  The Determinant Mutual Information (DMI) mechanism of \citet{kong2020dominantly,kong2024dominantly} is dominantly truthful within the class of \emph{consistent} reporting strategies, those that apply the same single-task strategy to every task.
  In settings where agents see multiple tasks before reporting, such as peer grading or peer review, it is natural to consider \emph{joint-task} strategies that may condition reports on the full signal vector.
  Perhaps surprisingly, we show that the DMI mechanism preserves truthful reporting as a best response among all joint-task strategies when other agents play consistent strategies, so that truthfulness remains a Bayes--Nash equilibrium in the joint-task class.
  Without the restriction of peers to consistent strategies, however, both dominant truthfulness and informed truthfulness fail against joint-task peer strategies.
\end{abstract}

\section{Introduction} \label{sec:intro}

Multi-task peer prediction mechanisms elicit multiple-choice answers from a population of agents without access to ground truth, by paying each agent based on how their reports correlate with peers' across a batch of similar tasks.
The Determinant Mutual Information (DMI) mechanism of \citet{kong2020dominantly,kong2024dominantly}, which pays each agent via determinants computed from empirical joint-report matrices, satisfies many desirable properties, including detail-free, informed-truthful, and dominantly truthful.
DMI is \emph{detail-free} in the sense that no prior knowledge of the joint signal distribution is required; moreover, it works with as few as $T \geq 2C$ tasks when each question has up to $C$ possible answers.
It is \emph{informed-truthful} in the sense that truthful reporting forms a Bayes--Nash equilibrium, and the truthful profile pays each agent strictly more than any uninformative profile in which reports are independent of signals.
It is \emph{dominantly truthful} in the sense that truthful reporting maximizes each agent's expected payment regardless of the strategies the other agents play~\citep[Theorem~5.3]{kong2020dominantly}.
Both truthfulness guarantees are stated with respect to the class of \emph{consistent} reporting strategies, in which each agent commits in advance to a single per-task kernel $\hat\sigma : \C \to \Delta(\C)$ and applies it independently to every task.

In many settings, agents may see some or all tasks in a batch before submitting their reports.
For example, a student may be sent all of their peer grading assignments at once, a crowdworker may have access to the entire batch of data to label, and a conference peer reviewer may be able to access all of their assigned papers at once.
In such cases, agents can choose their reports more holistically, and condition their joint report across all tasks on the joint vector of inputs or signals.
Such behavior might even be natural, as agents try to self-calibrate according to their prior, e.g.\ by making sure they do not fail all of their peers or reject every paper in their batch.
To capture this behavior, it is natural to consider \emph{joint-task strategies} $\sigma : \C^T \to \Delta(\C^T)$ that can choose reports based on all $T$ signals simultaneously.
As joint-task strategies are a strict superset of consistent strategies, an important question is thus whether DMI retains any truthfulness guarantees in this broader class.

We resolve this question, showing that truthful reporting remains a Bayes--Nash equilibrium of the DMI mechanism in the joint-task class, but that informed truthfulness and dominant truthfulness both fail.
Truthfulness follows from a stronger statement: when every peer plays a consistent strategy, truthful reporting is a best response among all joint-task strategies.
Without restricting peers to consistent strategies, however, dominance fails, even in the binary case (Example~\ref{ex:joint-task-peer-failure}).
The same example shows the failure of informed truthfulness as well (Remark~\ref{rem:not-informed-truthful}).

To prove truthfulness, we begin in \S~\ref{sec:binary} with the binary case $C = 2$, $T = 4$ (Theorem~\ref{thm:dmi-joint-strategy-binary}).
Here the payment is nonzero only when both signals and reports are distinct within each block, so the strategy boils down to choosing how to permute signals to reports, and the optimal strategy matches parity across blocks.
In \S~\ref{sec:general-C} we treat $C \geq 2$ and arbitrary task partitions $|T_\ell| \geq C$ (Theorem~\ref{thm:dmi-joint-strategy-general}) via a permutation expansion of $\det M_\ell$ over $C$-subsets of each block, with the same intuition; the optimum is in fact achieved by any matching-parity pair of per-block permutations (Remark~\ref{rem:other-optima}).

Our study is motivated by the observation of \citet{frongillo2026peer}, that DMI is generally not truthful once signals are real-valued and reports are coarsened, because differences in the signal magnitudes near the discretization boundaries break the subtle ties needed in the arguments below.

\section{Setting} \label{sec:setting}

We follow the multi-task peer prediction model of \citet{kong2020dominantly,kong2024dominantly}, and largely adopt the notation of \citet{frongillo2026peer}.

\subsection{Tasks, signals, and strategies}

There are $n \geq 2$ agents and $T$ tasks.
Each task is a multiple-choice question with finite \emph{choice space} $\C := [C] = \{1, 2, \ldots, C\}$, where $C \geq 2$.
At each task $t \in [T]$, agent~$i$ privately receives a \emph{signal} $s_i^t \in \C$ and then submits a \emph{report} $r_i^t \in \C$.
We write $S_i^t$ and $R_i^t$ for the corresponding random variables.
We abbreviate the agent's full signal vector as $s_i := (s_i^1, \ldots, s_i^T) \in \C^T$, and similarly for reports, and analogously for the random vectors $S_i, R_i$.

\begin{assumption}[A priori similar tasks; \protect{\citealp[Assumption~3.1]{kong2020dominantly}}] \label{assume:iid-tasks}
There exists a prior $U \in \Delta(\C^n)$ such that the signal vectors $(S_1^t, \ldots, S_n^t)_{t \in [T]}$ are i.i.d.\ copies of $(S_1, \ldots, S_n) \sim U$.
\end{assumption}

In \citet[Definition~3.3, Assumption~3.4]{kong2020dominantly}, agents must apply a single per-task strategy ``kernel'' consistently across tasks.
We work with the more general class of \emph{joint-task strategies}, in which an agent maps the entire signal vector to a distribution over the entire report vector.

\begin{definition}[Joint-task strategy] \label{def:joint-strategy}
  A \emph{joint-task strategy} for agent~$i$ is a map $\sigma_i : \C^T \to \Delta(\C^T)$.
\end{definition}

\begin{definition}[Consistent strategy] \label{def:consistent}
  A joint-task strategy $\sigma : \C^T \to \Delta(\C^T)$ is \emph{consistent} if there exists a per-task kernel $\hat\sigma : \C \to \Delta(\C)$ such that
  \[
    \sigma(s)(r) \;=\; \prod_{t=1}^T \hat\sigma(s^t)(r^t)
    \qquad \text{for all } s, r \in \C^T~.
  \]
\end{definition}

\begin{definition}[Truthful strategy] \label{def:truthful}
  The \emph{truthful} strategy is $\sigmatrue : s \mapsto \delta_s$.
\end{definition}

The truthful strategy is consistent, with kernel $\hat\sigma(s) = \delta_s$.

\subsection{The DMI mechanism}

For two $\C$-valued random variables $X, Y$, write $U_{X,Y} \in \reals^{C \times C}$ for the matrix of their joint distribution: $(U_{X,Y})_{x,y} = \Pr[X = x, Y = y]$.

\begin{definition}[Determinant mutual information; \protect{\citealp[Definition~4.1]{kong2020dominantly}}] \label{def:dmi}
The \emph{determinant mutual information} between $X$ and $Y$ is
\[
  \DMI(X; Y) \;:=\; \lvert \det U_{X,Y} \rvert ~.
\]
\end{definition}

The mechanism below uses the squared quantity $\DMI(X;Y)^2 = (\det U_{X,Y})^2$, which admits a simple unbiased estimator from a constant number of samples.

Identify $\C$ with $[C]:=\{1, \ldots, C\}$, and for $c \in \C$ let $\ones_c \in \reals^C$ be the standard basis vector.
Fix a partition of the tasks into two disjoint sets $[T] = T_1 \sqcup T_2$ with $|T_1|, |T_2| \geq C$.

\begin{definition}[Answer matrix] \label{def:answer-matrix}
For distinct agents $i, j \in [n]$ and $\ell \in \{1, 2\}$, the \emph{answer matrix} is
\[
  M_\ell^{ij}
  \;:=\;
  \sum_{t \in T_\ell} \ones_{r_i^t}\, \ones_{r_j^t}^{\top}
  \;\in\; \reals^{C \times C}~,
\]
whose $(c, c')$ entry counts the number of tasks $t \in T_\ell$ with $(r_i^t, r_j^t) = (c, c')$.
\end{definition}

\begin{definition}[DMI mechanism; \protect{\citealp[\S~5]{kong2020dominantly}}] \label{def:dmi-mech}
The \emph{DMI mechanism} pays each agent~$i$
\[
  M_\DMI^{\,i} \;:=\; \sum_{j \neq i} \det M_1^{ij} \cdot \det M_2^{ij}~.
\]
\end{definition}

When both $i$ and $j$ play consistent strategies, the expectation $\E[\det M_\ell^{ij}]$ is proportional to $\det U_{R_i, R_j}$~\citep[Claim~5.2]{kong2020dominantly}, so $\E[M_\DMI^{\,i}]$ is proportional to $\sum_{j \neq i} \DMI(R_i; R_j)^2$.
Within the class of consistent strategies, the mechanism is dominantly truthful~\citep[Theorem~5.3]{kong2020dominantly}.
It is left open by these results whether truthfulness remains a best response when an agent may deviate to an arbitrary joint-task strategy.

\section{Binary reports} \label{sec:binary}

We begin with the more concrete setting of binary reports ($C=2$) and four total tasks ($T=4$), the minimal setting to study the DMI mechanism.
Throughout we will write $\C = \{H, L\}$ and consider the partition $T_1 = \{1, 2\}, T_2 = \{3, 4\}$.
Our main results for this binary setting are as follows.

\begin{theorem} \label{thm:dmi-joint-strategy-binary}
  Let $C = 2$ and $T = 4$.
  Suppose every agent $j \neq i$ plays a consistent strategy.
  Then the truthful strategy $\sigmatrue$ maximizes $\E[M_\DMI^{\,i}]$ among all joint-task strategies $\sigma_i : \C^T \to \Delta(\C^T)$.
\end{theorem}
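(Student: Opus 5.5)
The plan is to reduce the payment to a product of signs and then optimize pointwise over agent~$i$'s signal vector.

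\textbf{Step 1 (sign form of the payment).} For a block $T_\ell=\{t,t'\}$ with $C=2$, $M_\ell^{ij}=\ones_{r_i^t}\ones_{r_j^t}^\top+\ones_{r_i^{t'}}\ones_{r_j^{t'}}^\top$ has rank at most one unless $r_i^t\neq r_i^{t'}$ and $r_j^t\neq r_j^{t'}$. In that case it is a permutation matrix. For a report vector $r$, define the block sign $\varepsilon_\ell(r)\in\{-1,0,+1\}$ to be $+1$ if the block reads $(H,L)$, $-1$ if it reads $(L,H)$, and $0$ if the two entries are equal. We will then check that
\[
\det M_\ell^{ij}=\varepsilon_\ell(r_i)\,\varepsilon_\ell(r_j)
\]
in all cases. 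Setting $a(r):=\varepsilon_1(r)\varepsilon_2(r)\in\{-1,0,1\}$, this gives $M_\DMI^{\,i}=a(r_i)\sum_{j\neq i}a(r_j)$.

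\textbf{Step 2 (conditional expectation of peers' terms).} Next we compute
\[
g(s_i):=\E\Big[\textstyle\sum_{j\neq i}a(R_j)\,\Big|\,S_i=s_i\Big].
\]
The key point is the following: by Assumption~\ref{assume:iid-tasks}, the task-level signal tuples are i.i.d. Each peer applies its kernel $\hat\sigma_j$ independently per task. Hence, conditional on $S_i=s_i$, the reports $R_j^1,\dots,R_j^4$ are independent, and $R_j^t$ depends only on $s_i^t$. Let $p_j(c):=\Pr[R_j^t=H\mid S_i^t=c]$. Then
\[
\E[\varepsilon_1(R_j)\mid s_i]=p_j(s_i^1)(1-p_j(s_i^2))-(1-p_j(s_i^1))p_j(s_i^2)=p_j(s_i^1)-p_j(s_i^2)=\varepsilon_1(s_i)\,\bigl(p_j(H)-p_j(L)\bigr),
\]
and similarly for block~2. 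The two blocks are conditionally independent, so
\[
g(s_i)=a(s_i)\,K \quad\text{with}\quad K:=\sum_{j\neq i}\bigl(p_j(H)-p_j(L)\bigr)^2\ \ge 0 .
\]
Crucially, $K$ does not depend on $\sigma_i$. Since $\sigma_i$ randomizes only on $S_i$, $R_i$ is independent of the peers' reports given $S_i$. Therefore
\[
\E[M_\DMI^{\,i}]=K\cdot\E\bigl[a(S_i)\,a(R_i)\bigr].
\]

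\textbf{Step 3 (pointwise optimization).} For each $s_i$, the agent chooses a distribution over $r_i$ to maximize $a(s_i)\,\E[a(R_i)\mid s_i]$. Since $|a|\le 1$, this is at most $|a(s_i)|$. Truthful reporting attains exactly $a(s_i)^2=|a(s_i)|$, so it attains the bound for every $s_i$. Taking expectations shows $\sigmatrue$ maximizes $\E[M_\DMI^{\,i}]$ among all joint-task strategies.

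The main obstacle is Step~2: justifying that the peers' contribution factorizes as $a(s_i)K$. This needs the consistent-strategy assumption on peers, which gives conditional independence across tasks and a common per-task kernel. It also needs the binary-signal fact that $p_j(s^1)-p_j(s^2)$ equals $\varepsilon(s)$ times a constant. Without these, $g$ need not be aligned with $a(s_i)$, which is the source of the failure in Example~\ref{ex:joint-task-peer-failure}.
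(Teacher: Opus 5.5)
Your proof is correct and follows essentially the same route as the paper: the block determinant is $\pm 1$ or $0$ according to the orderings of the two agents' reports, and conditional on $S_i$ the peer's expected block sign is $\varepsilon_\ell(s_i)\bigl(p_j(H)-p_j(L)\bigr)$; the blocks then factor by conditional independence, and the product is bounded by its absolute value, which truthful reporting attains. The only difference is packaging: you write the payment bilinearly as $a(r_i)\sum_{j\neq i}a(r_j)$ and fold all peers into the constant $K$ up front, whereas the paper fixes $r_i$, handles the two-agent case by the case analysis of Lemma~\ref{lem:dmi-det-formula} and Proposition~\ref{prop:two-agent-pointwise}, and then sums over $j$.
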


\begin{corollary}
  When $C = 2$ and $T = 4$, the truthful strategy $\sigmatrue$ is a Bayes--Nash equilibrium of the DMI mechanism among joint-task strategies.
\end{corollary}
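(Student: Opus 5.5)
The plan is to compute agent~$i$'s expected payment conditional on their own signal vector $s = s_i$ and report vector $r = r_i$. The aim is to show that it factors as $D \cdot \varphi(r)\varphi(s)$ for a constant $D \ge 0$ and a sign function $\varphi : \C^4 \to \{-1,0,1\}$. Once that factorization holds, the best response follows pointwise, as explained at the end.

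\textbf{Step 1: a closed form for $\det M_\ell^{ij}$.} With $|T_\ell| = 2$, the answer matrix is $M_\ell^{ij} = \ones_{a}\ones_{b}^\top + \ones_{a'}\ones_{b'}^\top$. Here $(a,b)$ and $(a',b')$ are the report pairs of $i$ and $j$ on the two tasks of the block. If $a = a'$ or $b = b'$, the matrix has a zero row or column, so $\det M_\ell^{ij} = 0$. Otherwise $M_\ell^{ij}$ is a permutation matrix, with determinant $+1$ or $-1$ according to whether the pairing is "matched" or "crossed".

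Define $\epsilon(x,y) := +1$ if $(x,y) = (H,L)$, $-1$ if $(x,y) = (L,H)$, and $0$ if $x = y$. The observation above then reads $\det M_1^{ij} = \epsilon(r_i^1,r_i^2)\,\epsilon(r_j^1,r_j^2)$, and similarly for block~2. Setting $\varphi(x) := \epsilon(x^1,x^2)\,\epsilon(x^3,x^4)$, the payment becomes
\[
  M_\DMI^{\,i} = \varphi(r_i) \sum_{j\ne i} \varphi(r_j).
\]

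\textbf{Step 2: the peers' conditional expectation.} Fix $j \ne i$, and let $\hat\sigma_j$ be its kernel. By Assumption~\ref{assume:iid-tasks} and consistency, conditional on $S_i = s$ the reports $R_j^t$ are independent across tasks. Moreover $\Pr[R_j^t = H \mid S_i = s] = q_j(s^t)$ for a fixed function $q_j$, obtained by composing $U$'s conditional of $S_j$ given $S_i$ with $\hat\sigma_j$.

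Hence
\[
  \E[\epsilon(R_j^1,R_j^2)\mid s] = q_j(s^1)(1-q_j(s^2)) - (1-q_j(s^1))q_j(s^2) = q_j(s^1)-q_j(s^2) = d_j\,\epsilon(s^1,s^2),
\]
where $d_j := q_j(H)-q_j(L)$. The two blocks are independent given $s$, so
\[
  \E[\varphi(R_j)\mid s] = d_j^2\,\varphi(s).
\]
Because agent~$i$'s randomization is independent of $R_j$ given $S_i$, we get
\[
  \E[M_\DMI^{\,i}\mid S_i=s, R_i=r] = D\,\varphi(r)\varphi(s), \qquad D := \sum_{j\ne i} d_j^2 \ge 0.
\]

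\textbf{Step 3: the pointwise best response.} For any joint-task strategy $\sigma_i$, the expected payment is $\E_s\,\E_{r\sim\sigma_i(s)}[D\varphi(r)\varphi(s)]$. Since $\varphi(r)\varphi(s) \le |\varphi(s)|$, this is at most $\E_s[D|\varphi(s)|]$. Truthful reporting $r = s$ attains $\varphi(s)^2 = |\varphi(s)|$ for every $s$, so $\sigmatrue$ attains this upper bound and is optimal among all joint-task strategies. The corollary follows because every peer's truthful strategy is consistent.

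\textbf{Main obstacle.} The only delicate point is Step~2. We must check that, conditional on $i$'s whole signal vector, a consistent peer's per-task report probabilities depend only on the matching coordinate $s^t$, and that the "$\ge$" direction of the tie $1 - q = q(L)$ collapses the expectation exactly to a multiple of $\epsilon(s^1,s^2)$. This is where consistency of the peers is used essentially: a joint-task peer could correlate across tasks and break the factorization.
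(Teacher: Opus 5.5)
Your proof is correct and follows essentially the paper's argument. Your factorization $\det M_\ell^{ij} = \epsilon(r_i^t, r_i^{t'})\,\epsilon(r_j^t, r_j^{t'})$ and the identity $\E[\epsilon(R_j^1,R_j^2)\mid s] = d_j\,\epsilon(s^1,s^2)$ are exactly Lemma~\ref{lem:dmi-det-formula} (your $q_j$ is the paper's $p$), and conditional independence of the blocks followed by pointwise maximization of $\varphi(r)\varphi(s)$, closing with the observation that $\sigmatrue$ is itself consistent, matches Proposition~\ref{prop:two-agent-pointwise} and the paper's deduction of the corollary, except that you aggregate over peers into $D=\sum_{j\neq i} d_j^2$ before maximizing instead of proving a pairwise best response and then summing.
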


We first prove the theorem for $n = 2$ (one agent, one peer), and then extend the conclusion to $n \geq 2$.
We show in \S~\ref{sec:peer-consistency-necessary} that dominant truthfulness fails against general joint-task peers.

\subsection{Proof of Theorem~\ref{thm:dmi-joint-strategy-binary}}

We first consider $n = 2$, from the point of view of agent~$i$ with peer agent~$j$.
The peer plays a consistent strategy with kernel $\hat\sigma_j$, so $R_j^t \sim \hat\sigma_j(S_j^t)$ for each task $t$, independently across tasks.
Define
\begin{align*}
  p(s)
  \;&:=\; \Pr[R_j^t = H \mid S_i^t = s]\\
  \;&=\; \sum_{s' \in \C} \Pr[S_j^t = s' \mid S_i^t = s] \, \hat\sigma_j(s')(H)~,
\end{align*}
which by Assumption~\ref{assume:iid-tasks} and consistency of $\hat\sigma_j$ does not depend on $t$.
The conditional law of $S_j^t$ given $S_i^t$ comes from the prior $U$.

The following lemma is the key building block for the $C=2$ case.
It says agent $i$ will have a zero expected value for the determinant for any block of two reports whenever (a) their reports are the same, or (b) their \emph{signals} are the same.
Case (a) follows directly from the definition of the determinant, but (b) is more subtle, and has to do with the exact cancellation of terms when agent $j$ plays a consistent strategy.
Recalling that the final payment is a product of independent determinant terms, the implication is that an agent can only have a nonzero expected payoff when their signals are distinct within each block, and so are their reports.
From there, the strategy boils down to a simple decision for each block: whether to match reports to signals or reverse them.
Proposition~\ref{prop:two-agent-pointwise} merely observes that matching in both blocks (or reversing in both) is optimal.

\begin{lemma} \label{lem:dmi-det-formula}
  Fix $\ell \in \{1, 2\}$, let $\{t, t'\} = T_\ell$, and for ease of exposition let us write $r := r_i^t$, $r' := r_i^{t'}$, $S := S_i^t$, $S' := S_i^{t'}$.
  For any reports $(r, r') \in \{H, L\}^2$,
  \[
    \E[\det M_\ell^{ij} \mid S, S']
    \;=\;
    \begin{cases}
      0,
        & r = r' \text{ or } S = S',\\
      p(H) - p(L),
        & (r, r') = (S, S'),\\
      p(L) - p(H),
        & (r, r') = (S', S)~.
    \end{cases}
  \]
\end{lemma}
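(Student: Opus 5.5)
The plan is to compute $\det M_\ell^{ij}$ explicitly for a single block of two tasks, as an affine function of the peer's reports, and then take conditional expectations using the consistency of $\hat\sigma_j$ and Assumption~\ref{assume:iid-tasks}.

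\emph{Step 1 (pointwise determinant).} Write $a := R_j^t$ and $b := R_j^{t'}$. Then $M_\ell^{ij} = \ones_r \ones_a^\top + \ones_{r'} \ones_b^\top$, and the cases go as follows.
\begin{itemize}
\item If $r = r'$, the matrix has a single nonzero row, so its determinant is $0$ identically. This settles the first half of the ``$0$'' case regardless of signals.
\item If $(r, r') = (H, L)$, the $H$-row is $\ones_a^\top$ and the $L$-row is $\ones_b^\top$. The determinant is therefore $+1$ if $(a, b) = (H, L)$, $-1$ if $(a, b) = (L, H)$, and $0$ otherwise.
\item This can be rewritten as $\mathbbm{1}[a = H](1 - \mathbbm{1}[b = H]) - (1 - \mathbbm{1}[a = H])\,\mathbbm{1}[b = H] = \mathbbm{1}[a = H] - \mathbbm{1}[b = H]$. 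This linearization is the key algebraic observation.
\item For $(r, r') = (L, H)$ the rows are swapped, so the determinant is $\mathbbm{1}[b = H] - \mathbbm{1}[a = H]$.
\end{itemize}

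\emph{Step 2 (conditional expectation).} We need $\E[\mathbbm{1}[R_j^t = H] \mid S_i^t, S_i^{t'}] = p(S_i^t)$. Since $j$ is consistent, $R_j^t$ is drawn from $\hat\sigma_j(S_j^t)$ independently of everything else given $S_j^t$. By Assumption~\ref{assume:iid-tasks}, the pair $(S_i^t, S_j^t)$ is independent of task $t'$, so conditioning additionally on $S_i^{t'}$ does not change the law of $R_j^t$ given $S_i^t$. The reports $(r, r')$ are treated as fixed values in the statement. When they are produced by agent $i$'s joint-task strategy, they depend only on $S_i$ and $i$'s own randomness, and are thus conditionally independent of $(R_j^t, R_j^{t'})$ given agent $i$'s signals. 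I would remark on this so the lemma can later be applied with $r, r'$ random; strictly this needs conditioning on all of $S_i$, which again leaves $p(S_i^t)$ unchanged by the i.i.d.\ assumption. Hence
\[
\E[\det M_\ell^{ij} \mid S, S'] = \pm\bigl(p(S) - p(S')\bigr),
\]
with sign $+$ for $(r, r') = (H, L)$ and $-$ for $(L, H)$.

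\emph{Step 3 (case bookkeeping).}
\begin{itemize}
\item If $S = S'$, the expression vanishes, which completes the ``$0$'' case.
\item If $S \neq S'$ and $(r, r') = (S, S')$, then either $(S, S') = (H, L)$, giving $p(H) - p(L)$, or $(S, S') = (L, H)$, giving $-(p(L) - p(H)) = p(H) - p(L)$. Either way the value is $p(H) - p(L)$.
\item The reversed case $(r, r') = (S', S)$ is the negative, $p(L) - p(H)$, by the same check.
\end{itemize}

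The main obstacle is not computational. It is Step 2: justifying carefully that the peer's report at task $t$ depends on agent $i$'s signals only through $S_i^t$. This is exactly where consistency of $\hat\sigma_j$ is used, and it is what fails for joint-task peers.
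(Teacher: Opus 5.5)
Your proof is correct and follows essentially the paper's route: compute the block determinant pointwise, take the conditional expectation using consistency of $\hat\sigma_j$ and Assumption~\ref{assume:iid-tasks}, then do the case bookkeeping. The one small difference is that you cancel the cross terms before taking expectations, writing the determinant as $\ones[R_j^t = H] - \ones[R_j^{t'} = H]$. This means you only need the marginals $p(S_i^t)$, whereas the paper uses conditional independence of the two tasks to evaluate $p(S)(1-p(S')) - (1-p(S))p(S')$.
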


\begin{proof}
  Further write $R := R_j^t$, $R' := R_j^{t'}$ for the peer's reports.
  If $r = r'$ then $M_\ell^{ij}$ has rank one, so $\det M_\ell^{ij} = 0$ deterministically.
  Otherwise, by symmetry it suffices to take $(r, r') = (H, L)$, in which case
  \[
    \det M_\ell^{ij} \;=\;
    \begin{cases}
      +1, & (R, R') = (H, L),\\
      -1, & (R, R') = (L, H),\\
      0,  & \text{otherwise}~.
    \end{cases}
  \]
  Across tasks the signal vectors are i.i.d.\ (Assumption~\ref{assume:iid-tasks}), so conditional on $(S, S')$ the pair $(S_j^t, R)$ is independent of $(S_j^{t'}, R')$.
  Hence
  \begin{align*}
    \E[\det M_\ell^{ij} \mid S, S']
    &= p(S)\bigl(1 - p(S')\bigr) - \bigl(1 - p(S)\bigr)\,p(S')\\
    &= p(S) - p(S')~.
  \end{align*}
  The case $(r, r') = (L, H)$ gives $p(S') - p(S)$ by the same argument.

  If $S = S'$ the right-hand side is zero.
  Otherwise $\{S, S'\} = \{H, L\}$, and the right-hand side is $p(H) - p(L)$ when reports match signals and $p(L) - p(H)$ when reports are swapped.
\end{proof}

\begin{proposition}[Two-agent best response] \label{prop:two-agent-pointwise}
  Suppose agent $j$ plays a consistent strategy.
  For any signal realization $s_i \in \C^T$, $\sigmatrue$ maximizes
  \[
    \E\bigl[\det M_1^{ij} \cdot \det M_2^{ij} \,\bigm|\, S_i = s_i\bigr]
  \]
  over all joint-task strategies $\sigma_i : \C^T \to \Delta(\C^T)$.
\end{proposition}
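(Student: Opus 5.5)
Since $\sigma_i(s_i)$ is a distribution over report vectors $r_i \in \C^4$, the conditional expectation in Proposition~\ref{prop:two-agent-pointwise} is linear in $\sigma_i(s_i)$. It therefore suffices to show that, for each fixed $s_i$, the deterministic report $r_i = s_i$ maximizes
\[
  F(r_i) \;:=\; \E\bigl[\det M_1^{ij} \cdot \det M_2^{ij} \,\bigm|\, S_i = s_i\bigr]
\]
over all $r_i \in \C^4$. The first step is factorization. Because the peer plays a consistent strategy and tasks are i.i.d.\ under Assumption~\ref{assume:iid-tasks}, the pairs $(S_j^t, R_j^t)$ with $t \in T_1$ are conditionally independent of those with $t \in T_2$ given $S_i = s_i$. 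Also, $\det M_\ell^{ij}$ depends only on agent~$i$'s (fixed) reports and agent~$j$'s reports in block $T_\ell$, and conditioning on $S_i$ reduces to conditioning on the block's own signals $(S_i^t, S_i^{t'})$. Hence
\[
  F(r_i) \;=\; \E[\det M_1^{ij} \mid S_i^1, S_i^2] \cdot \E[\det M_2^{ij} \mid S_i^3, S_i^4],
\]
and each factor is given by Lemma~\ref{lem:dmi-det-formula}.

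Next, write $\Delta := p(H) - p(L)$ and split into cases. If $s_i^1 = s_i^2$ or $s_i^3 = s_i^4$, the corresponding factor vanishes for every report by Lemma~\ref{lem:dmi-det-formula}. Then $F \equiv 0$, and truth-telling is trivially optimal.

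Otherwise the signals are distinct within each block. Each factor then lies in $\{0, \Delta, -\Delta\}$: it is $0$ if the block's reports coincide, $\Delta$ if they match the signals, and $-\Delta$ if they are swapped. So $F(r_i) \in \{0, \Delta^2, -\Delta^2\}$, and the maximum value is $\Delta^2 \ge 0$. This maximum is attained exactly when both blocks match, or both blocks swap, or trivially when $\Delta = 0$. The truthful report lies in the first of these cases, so it attains $\Delta^2 = \max_{r_i} F(r_i)$, which proves the claim.

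The main obstacle is the factorization step, specifically justifying the conditional independence of the two blocks' peer reports given agent~$i$'s full signal vector. I plan to argue it directly from Assumption~\ref{assume:iid-tasks}: the task-level signal profiles are independent across tasks, so conditioning on $S_i$ factors across tasks. Consistency of $\hat\sigma_j$ then means $R_j^t$ is drawn independently per task from $\hat\sigma_j(S_j^t)$. This per-task independence is exactly what would fail if the peer used a joint-task strategy.
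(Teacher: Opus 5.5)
Your proposal is correct and follows essentially the same route as the paper: factor the conditional expectation across the two blocks by conditional independence, then apply Lemma~\ref{lem:dmi-det-formula} to see that each factor is $0$ or $\pm(p(H)-p(L))$, so the product is at most $(p(H)-p(L))^2$, which truth-telling attains. Your explicit reduction from randomized to deterministic reports via linearity is a small addition that the paper makes explicit only in the general-$C$ version.
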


\begin{proof}
Fix any reports $r_i \in \C^T$.
Conditional on $S_i$, the determinants $\det M_1^{ij}$ and $\det M_2^{ij}$ depend on disjoint sets of peer signals and reports, so they are conditionally independent (Assumption~\ref{assume:iid-tasks}, consistency of $\hat\sigma_j$).
Hence
\[
  \E\bigl[\det M_1^{ij} \cdot \det M_2^{ij} \mid S_i\bigr]
  \;=\;
  \E[\det M_1^{ij} \mid S_i^1, S_i^2] \cdot \E[\det M_2^{ij} \mid S_i^3, S_i^4]~.
\]

If $s_i^1 = s_i^2$ or $s_i^3 = s_i^4$, Lemma~\ref{lem:dmi-det-formula} forces one factor to be zero for every report, so the conditional expected payment is zero, and $\sigmatrue$ trivially achieves the maximum.

Otherwise $s_i^1 \neq s_i^2$ and $s_i^3 \neq s_i^4$.
If $r_i^1 = r_i^2$ or $r_i^3 = r_i^4$, again one factor vanishes.
In all remaining cases each factor equals $\pm(p(H) - p(L))$.
The product of these factors equals $(p(H) - p(L))^2$ exactly when (a) the reports match the signals on both pairs (truthful) or (b) reverse them on both pairs; the product is $-(p(H) - p(L))^2$ otherwise.
In particular, the maximum $(p(H) - p(L))^2$ is attained by $\sigmatrue$.
\end{proof}

Finally, Theorem~\ref{thm:dmi-joint-strategy-binary} follows by observing that the payoff is simply a sum of two-agent payoffs.

\begin{proof}[Proof of Theorem~\ref{thm:dmi-joint-strategy-binary}]
Fix agent~$i$ and any signal realization $S_i = s_i$.
The payment decomposes as
\[
  M_\DMI^{\,i} \;=\; \sum_{j \neq i} \det M_1^{ij} \cdot \det M_2^{ij}~,
\]
and each summand depends only on the reports of $i$ and the single peer $j$.
By hypothesis each peer $j \neq i$ plays a consistent strategy, so Proposition~\ref{prop:two-agent-pointwise} applied to the pair $(i, j)$ shows that $\sigmatrue$ maximizes $\E[\det M_1^{ij} \cdot \det M_2^{ij} \mid S_i = s_i]$.
Summing over $j$, $\sigmatrue$ pointwise maximizes
\[
  \E[M_\DMI^{\,i} \mid S_i = s_i]~.
\]
By the tower property, $\sigmatrue$ also maximizes the unconditional expectation $\E[M_\DMI^{\,i}]$.
\end{proof}

\subsection{Necessity of peer consistency} \label{sec:peer-consistency-necessary}

The hypothesis in Theorem~\ref{thm:dmi-joint-strategy-binary} that peers be consistent cannot be dropped: even in the binary case, dominance against joint-task peers fails trivially.

\begin{example} \label{ex:joint-task-peer-failure}
Take $n = 2$, $C = 2$, $T = 4$, with prior $U$ uniform on $\{(H, H), (L, L)\} \subset \{H, L\}^2$, so $S_i^t = S_j^t$ deterministically with $S_i^t \sim \mathrm{Unif}\{H, L\}$.
Let agent~$j$ play the joint-task strategy $\sigma_j : s_j \mapsto \delta_{(H, L, H, L)}$ that ignores their signals and always reports $(H, L, H, L)$.
This $\sigma_j$ is deterministic and joint-task but not consistent: applied to the constant input $s_j = (H, H, H, H)$, it would require $\hat\sigma(H) = H$ (from $r_j^1 = H$) and simultaneously $\hat\sigma(H) = L$ (from $r_j^2 = L$).
Under truthful reporting, agent~$i$'s reports are $r_i^t = S_i^t$, i.i.d.\ uniform across tasks.
Conditioning on $S_i^1, S_i^2$,
\[
  \E[\det M_1^{ij} \mid S_i^1, S_i^2] \;=\; \ones[(S_i^1, S_i^2) = (H, L)] \;-\; \ones[(S_i^1, S_i^2) = (L, H)]~,
\]
giving $\E[\det M_1^{ij}] = 0$.
By independence of the tasks in $T_1$ from those in $T_2$, $\E[\det M_1^{ij} \cdot \det M_2^{ij}] = 0$.
Agent~$i$ may instead deviate to $\sigma_i = \sigma_j$ and play $r_i = (H, L, H, L)$, yielding $M_1^{ij} = M_2^{ij} = I$ and payment $1$ deterministically.
The joint-task deviation strictly dominates truthful, so dominant truthfulness fails against this joint-task peer.
The construction extends to arbitrary $C$ and $T \geq 2C$: take any deterministic peer report $r_j \in \C^T$ whose values cover $\C$ within each block, e.g., $r_j = (1, 2, \ldots, C, 1, 2, \ldots, C)$ for $T=2C$.
If agent $i$ plays truthfully, their payoff is again $\E[M_\DMI^{\,i}] = 0$, while choosing $r_i = r_j$ deterministically nets $\prod_\ell \prod_c |\{t \in T_\ell : r_j^t = c\}| \geq 1$.
\end{example}

\begin{remark}[Informed-truthfulness also fails] \label{rem:not-informed-truthful}
In the $C = 2$, $T = 4$ setting of Example~\ref{ex:joint-task-peer-failure}, the same construction shows that DMI is not \emph{informed-truthful} in the joint-task class: there is an uninformative profile that strictly outpays the truthful profile.
Under the truthful profile (both agents play $\sigmatrue$), $r_i^t = r_j^t = S_i^t$ is i.i.d.\ uniform across tasks, so for each block $\det M_\ell^{ij} \in \{0, 1\}$ takes value $1$ iff $r_i^1 \neq r_i^2$, which happens with probability $\tfrac12$.
The two blocks are independent, so $\E[M_\DMI^{\,i}] = \tfrac14$.
Under the uninformative profile in which both agents play the constant strategy $\sigma : s \mapsto \delta_{(H, L, H, L)}$, we have $M_1^{ij} = M_2^{ij} = I$ and $M_\DMI^{\,i} = 1$ deterministically.
Thus the uninformative profile pays $1 > \tfrac14$, violating informed truthfulness.
\end{remark}

\section{General setting} \label{sec:general-C}

We now lift the binary-report argument of \S~\ref{sec:binary} from $C = 2$ to arbitrary $C \geq 2$ and arbitrary partitions of the tasks into blocks $|T_1|, |T_2| \geq C$.
When $C = |T_1| = |T_2|$, the nonzero-payoff observation generalizes nicely: agent $i$'s expected payoff is nonzero only when both their reports and their \emph{signals} are all distinct on each block.
Whereas the $C = 2$, $T = 4$ case offered a binary choice (match or reverse) on each block, agent $i$ now effectively chooses a permutation to map signals to reports on each block.
The conditional expected payoff is proportional to the product of the two permutations' signs, so the optimum is achieved whenever the signs agree (Remark~\ref{rem:other-optima}); crucially, truthfulness is one such strategy.
The case of more than $2C$ tasks generalizes this observation further, where now we make use of the \emph{permutation expansion} of $\det M_\ell^{ij}$, but the underlying logic remains.

\begin{theorem} \label{thm:dmi-joint-strategy-general}
  Let $C \geq 2$, any $T \geq 2C$, and consider any partition $[T] = T_1 \sqcup T_2$ with $|T_1|, |T_2| \geq C$.
  Suppose every agent $j \neq i$ plays a consistent strategy.
  Then the truthful strategy $\sigmatrue$ maximizes $\E[M_\DMI^{\,i}]$ among all joint-task strategies $\sigma_i : \C^T \to \Delta(\C^T)$.
\end{theorem}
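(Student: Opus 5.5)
The plan is to reduce to a single peer $j$ and a fixed signal realization $S_i = s_i$, exactly as in the proof of Theorem~\ref{thm:dmi-joint-strategy-binary}. The payment $M_\DMI^{\,i}$ is a sum over $j \neq i$ of two-agent terms, so it suffices to show a pointwise statement: for every $s_i$ and every fixed report vector $r_i \in \C^T$, $\E[\det M_1^{ij}\det M_2^{ij} \mid S_i = s_i]$ under $r_i$ is at most its value under $r_i = s_i$. A randomized $\sigma_i(s_i)$ yields a convex combination of deterministic reports, and $\sigma_i$ depends only on $s_i$, so it is conditionally independent of the peer's signals and reports. Summing over $j$ and applying the tower property then finishes the proof.

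\textbf{Step 1 (permutation/Cauchy--Binet expansion).} Fix a block $T_\ell$. Let $X \in \{0,1\}^{T_\ell \times C}$ have rows $\ones_{r_i^t}^\top$, and let $Y$ have rows $\ones_{R_j^t}^\top$. Then $M_\ell^{ij} = X^\top Y$, and Cauchy--Binet gives
\[
  \det M_\ell^{ij} \;=\; \sum_{A \subseteq T_\ell,\ |A| = C} \det X_A \,\det Y_A ,
\]
where $X_A$ and $Y_A$ are the row-restrictions to $A$ (listed in increasing order). The factor $\det X_A$ is deterministic given $r_i$. It equals $\varepsilon_A(r_i) := \sgn$ of the map $A \to \C$, $t \mapsto r_i^t$, when that map is a bijection, and $0$ otherwise.

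\textbf{Step 2 (the key cancellation).} Define $P(s) \in \reals^C$ by $P(s)_c := \Pr[R_j^t = c \mid S_i^t = s]$; this is the analogue of $p(s)$, and is independent of $t$ by Assumption~\ref{assume:iid-tasks} and consistency of $\hat\sigma_j$. Let $Q$ be the $C\times C$ matrix with rows $P(1),\dots,P(C)$.

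Conditional on $S_i$, the rows of $Y_A$ are independent, and their means are $P(s_i^t)$. Since the determinant is multilinear in rows, $\E[\det Y_A \mid S_i = s_i]$ is the determinant of the matrix with rows $P(s_i^t)$, $t \in A$. This vanishes if two signals in $A$ coincide, since the matrix then has two equal rows. Otherwise it equals $\varepsilon_A(s_i)\det Q$.

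Hence
\[
  \E[\det M_\ell^{ij} \mid S_i = s_i] \;=\; \det Q \cdot f_\ell(r_i), \qquad f_\ell(r_i) := \sum_{A} \varepsilon_A(r_i)\,\varepsilon_A(s_i).
\]
I expect this to be the main technical point: justifying the multilinearity and independence step carefully, and fixing a consistent sign convention.

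\textbf{Step 3 (optimize).} The blocks $T_1$ and $T_2$ involve disjoint peer tasks, so they are conditionally independent given $S_i$. The conditional expected product is therefore $(\det Q)^2 f_1(r_i) f_2(r_i)$.

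Let $N_\ell(s_i)$ be the number of $C$-subsets $A \subseteq T_\ell$ on which $s_i$ is injective. Each summand of $f_\ell$ lies in $\{-1,0,1\}$ and is nonzero only on such $A$, so $|f_\ell(r_i)| \le N_\ell(s_i)$. Thus
\[
  f_1(r_i) f_2(r_i) \;\le\; N_1(s_i) N_2(s_i).
\]
Truthful reporting $r_i = s_i$ gives $\varepsilon_A(s_i)^2 = 1$ on every such $A$, so $f_\ell(s_i) = N_\ell(s_i)$ and the bound is attained. Multiplying by $(\det Q)^2 \ge 0$ shows that $\sigmatrue$ is a pointwise best response, which completes the argument via the reduction described at the start.
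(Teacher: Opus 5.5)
Your proposal is correct and follows essentially the same route as the paper. The steps match: Cauchy--Binet over $C$-subsets of each block, factoring the expected peer determinant via conditional independence into $\det Q$ times a sign, the bound $|f_\ell| \le N_\ell$ attained by truthful reporting, conditional independence across the two blocks, and summing over peers with the tower property. The only cosmetic difference is bookkeeping: you write each term as the product of two signs $\varepsilon_A(r_i)\,\varepsilon_A(s_i)$, whereas the paper evaluates $\det(A_J \tilde B^\top)$ directly as a row-permuted $Q$.
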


\subsection{Permutation expansion} \label{sec:perm-expansion}

\begin{definition}[Conditional report matrix] \label{def:Q-matrix}
  The \emph{conditional report matrix} $Q \in \reals^{C \times C}$ has entries
  \begin{align*}
    Q_{c, c'} \;&:=\; \Pr[R_j^t = c' \mid S_i^t = c]\\
    \;&=\; \sum_{c'' \in \C} \Pr[S_j^t = c'' \mid S_i^t = c]\,\hat\sigma_j(c'')(c')~,
  \end{align*}
  which by Assumption~\ref{assume:iid-tasks} and consistency of $\hat\sigma_j$ does not depend on $t$.
\end{definition}

When $C = 2$, $Q$ collapses to the binary kernel: $Q_{c, H} = p(c)$ and $Q_{c, L} = 1 - p(c)$, giving $\det Q = p(H) - p(L)$.

For a subset $J \subseteq [T]$ and tuple $x \in \C^T$, write $\restr{x}{J} := \{x^t : t \in J\}$ for the set of values $x$ takes on $J$.
When $|J| = |\C| = C$, the equality $\restr{x}{J} = \C$ holds exactly when these values are all distinct.
When $\restr{s_i}{J} = \restr{r_i}{J} = \C$, the assignment $s_i^t \mapsto r_i^t$ for $t \in J$ uniquely determines a permutation of $\C$, which we denote $\relab{J}{s_i}{r_i} \in \mathfrak{S}_C$.

\begin{lemma}[Permutation expansion] \label{lem:perm-expansion}
  For each $\ell \in \{1, 2\}$ and any reports $r_i \in \C^T$,
  \[
    \E[\det M_\ell^{ij} \mid S_i^t : t \in T_\ell]
    \;=\;
    \det Q \;\cdot\;
    \sum_{\substack{J \subseteq T_\ell\\ |J| = C}}
    \ones\!\bigl[\restr{S_i}{J} = \restr{r_i}{J} = \C\bigr] \, \sgn(\relab{J}{S_i}{r_i})~.
  \]
\end{lemma}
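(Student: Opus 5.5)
The plan is to expand $\det M_\ell^{ij}$ multilinearly in its rows and then average over the peer's reports. Write $M_\ell^{ij} = \sum_{t \in T_\ell} \ones_{r_i^t}\ones_{R_j^t}^\top$. Row $c$ of $M_\ell^{ij}$ is $\sum_{t \in T_\ell : r_i^t = c} \ones_{R_j^t}^\top$. Multilinearity of the determinant in rows gives
\[
  \det M_\ell^{ij} \;=\; \sum_{\tau} \det\Bigl[\ones_{R_j^{\tau(1)}}, \ldots, \ones_{R_j^{\tau(C)}}\Bigr]^\top,
\]
where $\tau$ ranges over maps $\C \to T_\ell$ with $r_i^{\tau(c)} = c$ for every $c$. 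Each such $\tau$ is injective, since distinct $c$ force distinct tasks. Its image $J = \tau(\C)$ is therefore a $C$-subset of $T_\ell$ with $\restr{r_i}{J} = \C$. Conversely, each $J$ with $|J| = C$ and $\restr{r_i}{J} = \C$ determines a unique $\tau$, namely $\tau(c)$ is the task in $J$ reporting $c$. So
\[
  \det M_\ell^{ij} = \sum_{J}\ones[\restr{r_i}{J} = \C]\,\det N_J,
\]
where $N_J$ has row $c$ equal to $\ones_{R_j^{\tau_J(c)}}^\top$.

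Next we take the conditional expectation given $(S_i^t)_{t\in T_\ell}$. By Assumption~\ref{assume:iid-tasks} and consistency of $\hat\sigma_j$, the peer reports $R_j^t$ for $t\in J$ are conditionally independent. The conditional law of $R_j^t$ is row $S_i^t$ of $Q$. Because $\det N_J$ is multilinear in its rows and the rows are independent, the expectation passes inside row by row, giving
\[
  \E[\det N_J \mid S_i] = \det \bigl[Q_{S_i^{\tau_J(1)},\cdot};\ \ldots;\ Q_{S_i^{\tau_J(C)},\cdot}\bigr].
\]
This matrix is $P Q$, where $P$ is the $0/1$ selector matrix whose row $c$ is $\ones_{S_i^{\tau_J(c)}}^\top$.

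If $\restr{S_i}{J}\ne\C$, two rows of $P$ coincide, so the determinant vanishes. This is the analogue of case (b) of Lemma~\ref{lem:dmi-det-formula}. If instead $\restr{S_i}{J}=\C$, then $P$ is a permutation matrix. It sends $c \mapsto S_i^{\tau_J(c)}$, which is the inverse of the relabeling $\relab{J}{S_i}{r_i}$, because $\tau_J(c)$ is the task in $J$ with report $c$. Hence $\det P = \sgn(\relab{J}{S_i}{r_i})^{-1} = \sgn(\relab{J}{S_i}{r_i})$. Summing over $J$ gives the stated formula, with the factor $\det Q$ pulled out.

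The main obstacle is the bookkeeping in this last step. I need to check that the direction of the permutation, $s\mapsto r$ versus $r \mapsto s$, does not matter, which holds since a permutation and its inverse have the same sign. I also need to justify that conditioning on all of $S_i^t$ for $t\in T_\ell$, rather than only those in $J$, does not disturb the independence. It does not, since the peer's report at task $t$ depends only on the task-$t$ signal profile, and these profiles are i.i.d. across tasks. As a sanity check, for $C=2$ and $|T_\ell|=2$ the formula reproduces Lemma~\ref{lem:dmi-det-formula} with $\det Q = p(H)-p(L)$.
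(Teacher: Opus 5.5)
Your proof is correct and takes essentially the same route as the paper. The paper gets the sum over $C$-subsets $J \subseteq T_\ell$ by applying Cauchy--Binet to $M_\ell^{ij} = AB^\top$, with $\det A_J = 0$ discarding the $J$ where $\restr{r_i}{J} \neq \C$; your row-multilinearity expansion derives the same expansion directly and keeps only the nonvanishing $J$. Your remaining steps match the paper's Leibniz computation of $\det \tilde B$ and its $A_J \tilde B^\top$ bookkeeping: passing the expectation through row by row via conditional independence, noting the vanishing when $\restr{S_i}{J} \neq \C$, and reading off $\sgn(\relab{J}{S_i}{r_i})\det Q$ from a row-permuted $Q$.
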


\begin{proof}
  Define matrices $A, B \in \{0, 1\}^{\C \times T_\ell}$ entry-wise by
  \begin{equation} \label{eq:AB-def}
    A_{c, t} \;:=\; \ones[r_i^t = c]~, \qquad B_{c, t} \;:=\; \ones[R_j^t = c]~,
  \end{equation}
  for $c \in \C$ and $t \in T_\ell$.
  Then $M_\ell^{ij} = A B^\top$: we have $(A B^\top)_{c, c'} = \sum_{t \in T_\ell} \ones[r_i^t = c]\,\ones[R_j^t = c']$, which by Definition~\ref{def:answer-matrix} equals $(M_\ell^{ij})_{c, c'}$.
  The Cauchy--Binet formula \citep[see, e.g.,][Corollary~(Cauchy-Binet) in \S~4.6, p.~214]{broida2012linear} expands the determinant as
  \begin{equation} \label{eq:cb-expansion}
    \det M_\ell^{ij} \;=\; \sum_{\substack{J \subseteq T_\ell\\ |J| = C}} \det A_J \cdot \det B_J~,
  \end{equation}
  where $A_J, B_J \in \{0, 1\}^{\C \times J}$ are the column-restrictions of $A, B$ to tasks in $J$, regarded as $C \times C$ matrices.
  (The product $\det A_J \cdot \det B_J$ is independent of the column ordering.)

  Let us now consider the expected value.
  As $r_i$ is fixed, from eq.~\eqref{eq:AB-def} we can see that $A$ is deterministic, while $B$ is random through the dependence on $R_j$; the same applies to $A_J$ and $B_J$.
  Thus we have
  \begin{equation} \label{eq:cb-expectation}
    \E\bigl[\det M_\ell^{ij} \,\big|\, S_i^t : t \in T_\ell\bigr]
    =
    \sum_{\substack{J \subseteq T_\ell\\ |J| = C}} \det A_J \cdot \E\bigl[\det B_J \,\big|\, S_i^t : t \in J\bigr]~.
  \end{equation}
  Let us fix some $J \subseteq T_\ell, |J| = C$ in this sum.
  By Assumption~\ref{assume:iid-tasks} and consistency of $\hat\sigma_j$, the random variables $\{R_j^t : t \in J\}$ are mutually independent given $\{S_i^t : t \in J\}$, with $\Pr[R_j^t = c' \mid S_i^t = c] = Q_{c, c'}$.
  Applying the Leibniz formula \citep[see, e.g.,][\S~4.1]{broida2012linear} to $\det B_J$, we have
  \begin{equation} \label{eq:detB-tilde}
    \E\bigl[\det B_J \,\big|\, S_i^t : t \in J\bigr]
    \;=\; \sum_{\tau \in \mathfrak{S}_C} \sgn(\tau) \prod_{t \in J} Q_{S_i^t,\, \tau(t)}
    \;=\; \det \tilde B~,
  \end{equation}
  where $\tilde B \in \reals^{\C \times J}$ has entries $\tilde B_{c, t} := Q_{S_i^t, c}$.

  Just as in the proof of Lemma~\ref{lem:dmi-det-formula}, we can see from eq.~\eqref{eq:detB-tilde} that only $J$ terms with $\restr{r_i}{J} = \restr{S_i}{J} = \C$ contribute to the sum in eq.~\eqref{eq:cb-expectation}:
  when $\restr{r_i}{J} \neq \C$, two columns of $A_J$ coincide and $\det A_J = 0$;
  when $\restr{S_i}{J} \neq \C$, two columns of $\tilde B$ coincide and $\det \tilde B = 0$.
  For the remaining $J$ with $\restr{r_i}{J} = \restr{S_i}{J} = \C$, the relabeling $\relab{J}{r_i}{S_i}$ is defined.
  The product $A_J \tilde B^\top$ has entry
  \[
    (A_J \tilde B^\top)_{c, c'}
    \;=\; \sum_{t \in J} \ones[r_i^t = c]\, Q_{S_i^t, c'}
    \;=\; Q_{\relab{J}{r_i}{S_i}(c),\, c'}~,
  \]
  since the indicator picks out the unique $t \in J$ with $r_i^t = c$, whose signal $S_i^t$ is $\relab{J}{r_i}{S_i}(c)$ by definition.
  We now see that $A_J \tilde B^\top$ is just $Q$ with rows permuted by $\relab{J}{r_i}{S_i}$, giving
  \[
    \det A_J \cdot \det \tilde B
    \;=\; \det(A_J \tilde B^\top)
    \;=\; \sgn(\relab{J}{S_i}{r_i})\, \det Q~,
  \]
  where the last equality uses $\sgn(\relab{J}{r_i}{S_i}) = \sgn(\relab{J}{S_i}{r_i})$ as inverses.
  In light of eq.~\eqref{eq:cb-expansion} and eq.~\eqref{eq:cb-expectation}, summing over $J$ completes the proof.
\end{proof}

For $C = 2$ with $|T_\ell| = 2$, the only $C$-subset of $T_\ell$ is $T_\ell$ itself, so each $A_\ell$ reduces to a single summand: zero when $\restr{s_i}{T_\ell} \neq \C$ or $\restr{r_i}{T_\ell} \neq \C$, and $\sgn(\relab{T_\ell}{s_i}{r_i}) \in \{-1, 1\}$ otherwise.
Combined with $\det Q = p(H) - p(L)$, Lemma~\ref{lem:perm-expansion} recovers Lemma~\ref{lem:dmi-det-formula}.

\subsection{Proof of Theorem~\ref{thm:dmi-joint-strategy-general}} \label{sec:proof-general}

For a signal realization $s_i \in \C^T$, reports $r_i \in \C^T$, and block $\ell \in \{1, 2\}$, write
\begin{align*}
  N_\ell(s_i) \;&:=\; \bigl|\{ J \subseteq T_\ell : |J| = C,\; \restr{s_i}{J} = \C \}\bigr|~,\\
  A_\ell(s_i, r_i) \;&:=\; \sum_{\substack{J \subseteq T_\ell\\ |J| = C}} \ones[\restr{s_i}{J} = \restr{r_i}{J} = \C] \, \sgn(\relab{J}{s_i}{r_i})~,
\end{align*}
the count of $C$-subsets of $T_\ell$ on which $s_i$ is a bijection, and the signed sum from Lemma~\ref{lem:perm-expansion}, so that
\[
  \E[\det M_\ell^{ij} \mid S_i^t : t \in T_\ell] \;=\; \det Q \cdot A_\ell(S_i, r_i)~.
\]
We follow the proof strategy of the binary case in \S~\ref{sec:binary}.

To begin, the following is the key observation used in the optimality of $\sigmatrue$.
\begin{lemma} \label{lem:A-bound}
  For all $s_i, r_i \in \C^T$ and $\ell \in \{1, 2\}$, $|A_\ell(s_i, r_i)| \leq N_\ell(s_i)$, with equality when $r_i^t = s_i^t$ for all $t \in T_\ell$.
\end{lemma}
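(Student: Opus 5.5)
This is a direct counting argument. By the triangle inequality,
\[
  |A_\ell(s_i, r_i)|
  \;\le\;
  \sum_{\substack{J \subseteq T_\ell\\ |J| = C}}
  \ones\!\bigl[\restr{s_i}{J} = \restr{r_i}{J} = \C\bigr]\,
  \bigl|\sgn(\relab{J}{s_i}{r_i})\bigr|~.
\]
Each sign has absolute value one, so every summand is at most the indicator $\ones[\restr{s_i}{J} = \C]$. Summing over all $C$-subsets $J \subseteq T_\ell$ therefore gives at most $N_\ell(s_i)$. This settles the inequality, with no assumption on $r_i$.

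For the equality case, suppose $r_i^t = s_i^t$ for all $t \in T_\ell$. Then $\restr{r_i}{J} = \restr{s_i}{J}$ for every $J \subseteq T_\ell$. Consequently the joint indicator $\ones[\restr{s_i}{J} = \restr{r_i}{J} = \C]$ reduces to $\ones[\restr{s_i}{J} = \C]$.

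On each such $J$, the relabeling $\relab{J}{s_i}{r_i}$ sends each value $s_i^t$ to $r_i^t = s_i^t$. It is therefore the identity permutation, whose sign is $+1$. Every surviving term thus contributes exactly $+1$, and the sum equals $N_\ell(s_i)$. Since this value is nonnegative, it also equals its absolute value, giving $|A_\ell(s_i, r_i)| = N_\ell(s_i)$.

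I don't expect a real obstacle here. The only point needing care is that the relabeling is well defined exactly on the surviving $J$, which holds because the indicator requires $\restr{s_i}{J} = \C$, so the values on $J$ are distinct. The lemma depends only on the behavior of $r_i$ on $T_\ell$, which is what lets it be applied block by block later. Combined with $\E[\det M_\ell^{ij} \mid S_i^t : t \in T_\ell] = \det Q \cdot A_\ell(S_i, r_i)$, it bounds the conditional payoff product by $(\det Q)^2 N_1(s_i) N_2(s_i)$, and truthful reporting attains this bound.
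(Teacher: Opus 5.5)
Your proposal is correct and follows essentially the same argument as the paper: you bound each summand by $\ones[\restr{s_i}{J} = \C]$ to get $|A_\ell| \le N_\ell$. In the equality case, the relabeling is the identity on every surviving $J$, so each term contributes $+1$ and $A_\ell(s_i, s_i) = N_\ell(s_i)$.
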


\begin{proof}
  Each summand in $A_\ell$ has absolute value at most $1$, and attains absolute value $1$ on the set
  $\{J : \restr{s_i}{J} = \restr{r_i}{J} = \C\} \subseteq \{J : |J|=C, \restr{s_i}{J} = \C\}$.
  Thus $|A_\ell| \leq N_\ell$.
  When $r_i^t = s_i^t$ for every $t \in T_\ell$, $\relab{J}{s_i}{r_i}$ is the identity permutation for every $J \subseteq T_\ell$ with $\restr{s_i}{J} = \C$, giving $\sgn(\relab{J}{s_i}{r_i}) = 1$ on each.
  We conclude $A_\ell(s_i, s_i) = N_\ell(s_i)$.
\end{proof}

\begin{proposition}[Two-agent best response] \label{prop:two-agent-general}
  Suppose agent $j$ plays a consistent strategy.
  For any signal realization $s_i \in \C^T$, $\sigmatrue$ maximizes
  \[
    \E\bigl[\det M_1^{ij} \cdot \det M_2^{ij} \,\bigm|\, S_i = s_i\bigr]
  \]
  over all joint-task strategies $\sigma_i : \C^T \to \Delta(\C^T)$.
\end{proposition}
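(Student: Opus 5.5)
The plan mirrors the proof of Proposition~\ref{prop:two-agent-pointwise}: factor the conditional expectation across the two blocks, bound each factor using Lemma~\ref{lem:A-bound}, and observe that truthful reporting attains the resulting bound.

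Fix $s_i$. Since the payoff is linear in the mixed strategy $\sigma_i(s_i)$, it suffices to show that for every deterministic report vector $r_i \in \C^T$,
\[
  \E\bigl[\det M_1^{ij}\det M_2^{ij} \mid S_i = s_i\bigr] \;\le\; (\det Q)^2\, N_1(s_i)\, N_2(s_i)~,
\]
with equality at $r_i = s_i$. A general $\sigma_i(s_i)$ then yields a convex combination of such values, which is at most this bound.

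First, conditional independence. With $r_i$ fixed and conditioned on $S_i = s_i$, $\det M_1^{ij}$ depends only on $(S_j^t, R_j^t)_{t\in T_1}$ and $\det M_2^{ij}$ only on $(S_j^t, R_j^t)_{t \in T_2}$. By Assumption~\ref{assume:iid-tasks} and consistency of $\hat\sigma_j$, these are independent given $S_i$. Moreover, the conditional law of the block-$\ell$ variables given all of $S_i$ equals their law given only $(S_i^t)_{t\in T_\ell}$. The conditional expectation therefore factors as
\[
  \E[\det M_1^{ij}\mid S_i^t : t\in T_1]\cdot\E[\det M_2^{ij}\mid S_i^t : t\in T_2] \;=\; (\det Q)^2 A_1(s_i,r_i)\,A_2(s_i,r_i)~,
\]
where the equality is Lemma~\ref{lem:perm-expansion}.

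Second, the bound. Since $(\det Q)^2 \ge 0$, Lemma~\ref{lem:A-bound} gives
\[
  (\det Q)^2 A_1 A_2 \;\le\; (\det Q)^2 |A_1|\,|A_2| \;\le\; (\det Q)^2 N_1(s_i) N_2(s_i)~.
\]
The equality case of that lemma shows that $r_i = s_i$ gives $A_\ell = N_\ell$ for both blocks, so $\sigmatrue$ attains the bound.

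The main point needing care is the factorization step. The block-$1$ and block-$2$ reports of agent $i$ may be jointly chosen as functions of all of $s_i$, but once we condition on $S_i = s_i$ and fix a deterministic $r_i$, those reports are constants. The only randomness left is the peer's, which is block-independent; that independence is exactly where consistency of $\hat\sigma_j$ is used. The rest is sign bookkeeping, made harmless because $(\det Q)^2$ is nonnegative even though $\det Q$ itself may be negative.
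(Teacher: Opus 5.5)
Your proposal is correct and follows the paper's proof essentially step for step. You reduce to deterministic reports by linearity, factor the conditional expectation across blocks using the i.i.d. assumption and consistency of $\hat\sigma_j$ to get $(\det Q)^2 A_1 A_2$, bound this by $(\det Q)^2 N_1 N_2$ via Lemma~\ref{lem:A-bound}, and note that equality holds at $r_i = s_i$.
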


\begin{proof}
  Without loss of generality consider a deterministic strategy $\sigma_i : s_i \mapsto r_i$; the randomized case follows by linearity of expectation.
  By Assumption~\ref{assume:iid-tasks} and consistency of $\hat\sigma_j$, the peer signals and reports on $T_1$ are independent of those on $T_2$ given $S_i$, so
  \begin{align*}
    \E\bigl[\det M_1^{ij} \cdot \det M_2^{ij} \,\bigm|\, S_i\bigr]
    &= \E[\det M_1^{ij} \mid S_i^t : t \in T_1] \cdot \E[\det M_2^{ij} \mid S_i^t : t \in T_2]\\
    &= (\det Q)^2 \cdot A_1(s_i, r_i) \cdot A_2(s_i, r_i)~.
  \end{align*}
  Applying Lemma~\ref{lem:A-bound} to both factors,
  \begin{align*}
    A_1(s_i, r_i) \cdot A_2(s_i, r_i)
    &\leq |A_1| \cdot |A_2|\\
    &\leq N_1(s_i) \cdot N_2(s_i)\\
    &= A_1(s_i, s_i) \cdot A_2(s_i, s_i)~.
  \end{align*}
  Since $(\det Q)^2 \geq 0$, $\sigmatrue$ maximizes the conditional expectation.
\end{proof}

\begin{proof}[Proof of Theorem~\ref{thm:dmi-joint-strategy-general}]
  We follow the proof of Theorem~\ref{thm:dmi-joint-strategy-binary} almost identically.
  As each peer $j \neq i$ plays a consistent strategy, Proposition~\ref{prop:two-agent-general} for the pair $(i, j)$ shows that $\sigmatrue$ maximizes $\E[\det M_1^{ij} \cdot \det M_2^{ij} \mid S_i = s_i]$.
  Summing over $j$, then, $\sigmatrue$ maximizes $\E[M_\DMI^{\,i} \mid S_i = s_i]$, and by the tower property also the unconditional expectation $\E[M_\DMI^{\,i}]$.
\end{proof}

\begin{remark}[Other optimal strategies] \label{rem:other-optima}
  Recall that in the binary case, there were two optimal strategies: being truthful in both blocks, or the strategy that swaps $H\mapsto L$ and $L\mapsto H$ in both.
  In the general setting we see a similar phenomenon, but instead with the parity of the induced permutation in place of the truthful/swap decision.
  Specifically, the optimal set is all strategies of the form $r_i^t = \sigma_\ell(s_i^t)$ for each $t \in T_\ell$, $\ell \in \{1, 2\}$, where $\sigma_1, \sigma_2 \in \mathfrak{S}_C$ are any permutations satisfying $\sgn(\sigma_1) = \sgn(\sigma_2)$.
  For $C \geq 3$, matching parity is strictly weaker than picking the same permutation in both blocks, so the optimal set is substantially larger than in the binary case, with truthful being one of many optima.
\end{remark}

\subsection*{Acknowledgements}

The author thanks Ian Kash and Mary Monroe for helpful discussions.

\bibliographystyle{plainnat}

\end{document}